\newtheorem{theorem}{Theorem}[section]
\newtheorem{corollary}[theorem]{Corollary}
\newtheorem{lemma}[theorem]{Lemma}
\theoremstyle{definition}
\newtheorem{definition}[theorem]{Definition}
\begin{document}

\title{Classifying All Transducer Degrees Below $N^3$}
\author{Noah Kaufmann}	

\maketitle

\begin{abstract}
We answer an open question in the theory of transducer degrees initially posed in \cite{3}, on the structure of polynomial transducer degrees, in particular the question of what degrees, if any, lie below the degree of $n^3$. Transducer degrees are the equivalence classes formed by word transformations which can be realized by a finite-state transducer. While there are no general techniques to tell if a word $w_1$ can be transformed into $w_2$ via an FST, the work of Endrullis et al. in \cite{2} provides a test for the class of spiralling functions, which includes all polynomials. We classify fully the degrees of all cubic polynomials which are below $n^3$, and many of the methods can also be used to classify the degrees of polynomials of higher orders.
\end{abstract}

\section{Introduction}

Finite-state transducers (FSTs) and infinite sequences are both fairly natural objects in mathematics, but not much is known about the structure of the order induced by the transducibility relation on infinite sequences ($\sigma \geq \tau$ if $\sigma$ can be transduced into $\tau$ via an FST). However, there are many parallels between the degrees induced by transducibility and Turing degrees, so it's natural to begin by asking the same questions that have been asked (and answered) for Turing degrees. Many initial results have already been obtained (mostly in \cite{1}), for instance: 

\begin{itemize}

\item The bottom degree \textbf{0} is the set of all ultimately periodic sequences (sequences of the form $uvvvv...$ for some words $u$,$v$) 

\item The partial order of degrees is not dense or well-founded

\item There are no maximal degrees 

\item A set of degrees has an upper bound if and only if it is countable 

\item There are at least $\omega$ atom degrees (degrees which are exactly one step above \textbf{0})

\end{itemize}

The last result in particular was proven in (Theorem 32, \cite{4}) when it was shown that there is a way to construct an atomic degree $\langle p_k(n) \rangle$ for every $k$, where $p_k(n)$ is a polynomial of order $k$. (To avoid confusion between degrees of transducibility and degrees of polynomials, we use the term "order" for the degree of a polynomial.) However, until now, all that was known about the structure of degrees of polynomials is that polynomials of different orders are always incomparable, and that the degree of $n^k$ is strictly greater than the degree of $p_k(n)$. We will provide a full classification of all degrees below the degree of $n^3$, with methods that will be useful for classifying degrees of higher order polynomials.

\section{Preliminaries}

Let $\Sigma$ be an alphabet. Recall that the empty word is denoted by $\epsilon$. $\Sigma^*$ is the set of all finite words over $\Sigma$, and $\Sigma^+$ is $\Sigma^*$ \textbackslash $\epsilon$. Denote by $\Sigma^{\mathbb{N}}$ the set of infinite sequences over $\Sigma$. $\Sigma^{\infty}$ is the set of all finite and infinite words over $\Sigma$, i.e. $\Sigma^* \cup \Sigma^{\mathbb{N}}$. For $f: \mathbb{N} \rightarrow \mathbb{N}$ and k in $\mathbb{N}$, the $k$-th shift of f is denoted by $S^k(f)(n) = f(n+k)$ for all $n$ in $\mathbb{N}$.

We use a boldface letter like $\bm{\alpha}$ to denote an ordered tuple $\langle a_0, a_1, \ldots , a_{k-1} \rangle$, and we use $\bm{\alpha^\prime}$ to denote the cyclic shift of $\bm{\alpha}$: $\langle a_1, a_2, \ldots , a_{k-1}, a_0 \rangle$.

\section{Finite-State Transducers and Transducer Degrees}

We will mostly focus on the definitions and results which are useful for the original material in this paper. For a more complete background of the known results for transducer degrees and their proofs, see \cite{1,2,3,4}.

For the following definitions for FSTs, we restrict ourselves to only considering complete pure sequential FSTs, where for every state $q$ and every input letter $a$, there is exactly one successor state $\delta(q,a)$, and the function realized by such a transducer preserves prefixes (if $u$ is a prefix of $v$, then $f(u)$ is a prefix of $f(v)$). We also restrict ourselves to using only \textbf{2} = \{0,1\} as both the input and output alphabet for convenience.

\begin{definition} A \textit{finite-state transducer} is a tuple $T = \langle Q, q_0,\delta,\lambda \rangle$ where $Q$ is a finite set of states, $q_0 \in Q$ is the initial state, $\delta : Q \times \textbf{2} \rightarrow Q$ is the transition function, and $\lambda : Q \times \textbf{2} \rightarrow \textbf{2}^*$ is the output function.  
\end{definition}

Note that $\delta$ and $\lambda$ can be extended ($\delta : Q \times \textbf{2}^* \rightarrow Q,\lambda : Q \times \textbf{2}^\infty \rightarrow \textbf{2}^\infty$) as follows: \\

$$\delta(q,\epsilon) = q, \delta(q,au) = \delta(\delta(q,a),u), \textrm{ where } q \in Q, a \in \textbf{2}, u \in \textbf{2}^*$$

$$\lambda(q, \epsilon) = \epsilon, \lambda(q, au) = \lambda(q,a) \cdot \lambda(\delta(q,a),u), \textrm{ where } q \in Q, a \in \textbf{2}, u \in \textbf{2}^\infty$$

Intuitively, the above equations simply correspond to inputting a string of letters as opposed to just one letter, and reading the string one letter at a time. So then the function $T: \textbf{2}^\infty \rightarrow \textbf{2}^\infty$ induced by the FST $T$ on (finite or infinite) words is defined by $T(u) = \lambda(q_0,u)$, the value of the output $\lambda$ after inputting $u$, starting at the initial state $q_0$. This allows us to give the following definition:

\begin{definition}  Let $T$ be an FST, and let $\sigma, \tau \in \textbf{2}^{\mathbb{N}}$ be infinite sequences. We say that $T$ $transduces$ $\sigma$ to $\tau$, or that $\tau$ is the $T$-$transduct$ of $\sigma$, if $T(\sigma) = \tau$. In general, for any two infinite sequences $\sigma, \tau$ we say that $\sigma \geq \tau$ if there exists some $T$ so that $T(\sigma) = \tau$.
\end{definition}

This relation $\geq$ is reflexive, and can be shown to be transitive by composition of FSTs (See Lemma 8, \cite{1}). If for some $\sigma, \tau$ we have $\sigma \geq \tau$ but not vice versa, we say $\sigma > \tau$. If we do have $\sigma \geq \tau$ and $\tau \geq \sigma$ then we say that $\sigma \equiv \tau$, and we use $[\sigma]$ to denote the equivalence class of $\sigma$. We call $[\sigma]$ the \textit{degree} of $\sigma$. 

\section{Weight Products and Useful Results}

\begin{definition} For a function $f: \mathbb{N} \rightarrow \mathbb{N}$ we define the sequence $\langle f \rangle \in \textbf{2}^{\mathbb{N}}$ by

\begin{center}$\langle f \rangle = \prod_{i=0}^{\infty} 10^{f(i)} = 10^{f(0)}10^{f(1)}10^{f(2)} \ldots$\end{center} 
\end{definition}

Having defined $\langle f \rangle$ in this way, some relatively simple initial results have been obtained in \cite{2}, which we state here.

\begin{lemma} Let $f: \mathbb{N} \rightarrow \mathbb{N}, a,b \in \mathbb{N}.$ We have the following equivalences and inequalities:\\

\begin{enumerate}
\item $\langle af(n) \rangle \equiv \langle f(n) \rangle, $ for $ a > 0$\\
\item $\langle f(n + a) \rangle \equiv \langle f(n) \rangle$\\
\item $\langle f(n) + a \rangle \equiv \langle f(n) \rangle$\\
\item $\langle f(n) \rangle \geq \langle f(an) \rangle, $ for $ a > 0$\\
\item $\langle f(n) \rangle \geq \langle af(2n) + bf(2n + 1) \rangle$\\
\end{enumerate}

\end{lemma}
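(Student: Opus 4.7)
The plan is to prove each part by exhibiting an explicit FST that realizes the claimed transduction, in both directions for the equivalences (parts 1--3) and in the single stated direction for the inequalities (parts 4, 5). The common observation is that $\langle f \rangle$ has a natural block decomposition into segments $10^{f(i)}$, so every operation on $f$ appearing in the statement corresponds to a simple manipulation of these blocks, and each such manipulation can be carried out by an FST whose states track a bounded amount of information: a position within a block, or a block index modulo some fixed constant.

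For parts 1 and 3 (multiplying $f$ by $a > 0$, and adding a constant $a$ to $f$), the ``easy'' directions use a single main state: replace every $0$ read with $0^a$ for part 1, or replace every $1$ read with $10^a$ for part 3. The reverse directions must delete zeros from each block, and use $a$ (respectively $a+1$) states to count zeros modulo $a$ and emit a $0$ only on every $a$-th one (respectively to discard the first $a$ zeros of each block and then copy). Part 4 is analogous: maintain a counter $i \bmod a$ incremented at each $1$, copy the block verbatim when $i \equiv 0$, and otherwise output nothing.

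For part 2, the forward direction skips the first $a$ blocks using $a+1$ states that count $1$s while emitting nothing, then copies the suffix verbatim. The reverse direction must prepend the fixed finite word $w = 10^{f(0)} 10^{f(1)} \cdots 10^{f(a-1)}$; since $a$ and the values $f(0), \ldots, f(a-1)$ are constants of the construction, this prefix is hardcoded into the output of the single transition out of the initial state, after which the FST copies its input. Part 5 uses two main states tracking whether the current block is the even or odd block of a pair: zeros in an even block are emitted as $0^a$, zeros in an odd block as $0^b$, the leading $1$ of an even block is emitted as $1$, and the leading $1$ of an odd block is emitted as $\epsilon$, so that each pair $10^{f(2i)} 10^{f(2i+1)}$ becomes the single block $10^{af(2i) + bf(2i+1)}$.

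The only real subtlety is bookkeeping: ensuring every FST is complete (a defined transition for every state-input pair, even on inputs not of the form $\langle f \rangle$), correctly handling the start of the word (so that the prefix insertion in the reverse direction of part 2 fires exactly once, and so that the initial state in part 5 treats the very first $1$ as the leading $1$ of an even block), and checking the prefix-preserving condition, which holds automatically because each transition appends a string depending only on the current state and the current input letter. No single part is the main obstacle; the work is routine once the block-level viewpoint and the appropriate modular counters have been identified.
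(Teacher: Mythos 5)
Your constructions are correct, and they are the standard block-manipulation FSTs for these facts; the paper itself states this lemma without proof, deferring to the cited work of Endrullis et al., where essentially the same transducers are exhibited. The only quibbles are trivial bookkeeping ones you already flag yourself (e.g.\ the skip-then-copy machine for part 2 most naturally uses $a+2$ states rather than $a+1$, and the ``single transition out of the initial state'' for the reverse direction of part 2 is really two transitions, one per input letter, to keep the FST complete), none of which affects correctness.
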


One interesting consequence of the third equality is that any polynomial with a positive leading coefficient can be thought of as a stream and thus associated with a transducer degree, even if some of its values happen to be negative. For instance, the polynomial $(n-2)^3$ can't directly be interpreted as a stream, since it is negative for $n = 0, 1$. However, if we take $(n-2)^3 + 8$, then this polynomial is nonnegative, and therefore corresponds to a stream (and thus a degree). So even though it's technically incorrect, it will be convenient sometimes to refer to a degree such as $\langle (n-2)^3 \rangle$, when we mean more precisely the degree $\langle (n-2)^3 + k \rangle$ for any $k \geq 8$. Similarly, the first equality allows us to refer to the degree of a function with rational coefficients, where we really mean the degree of the corresponding function multiplied by the appropriate scalar to eliminate any fractional coefficients.

Now we move on to weight products. For this paper we don't provide the full context or proof for the main result we need (Theorem 4.4, which provides us with an incredibly useful result for comparing polynomial degrees), but a more detailed explanation can be found in (Theorem 21, \cite{3}). We start by defining a weight.

\begin{definition} A \textit{weight} is a tuple $\alpha = \langle a_0, a_1, \ldots , a_{k-1},b \rangle \in \mathbb{Q}$ with each $a_i \geq 0$. If $a_i = 0$ for all $i$ then we say the weight is constant. To distinguish between weights and tuples of weights, weights will not be bolded but tuples of weights will.
\end{definition}

Given a weight $\alpha$ as above and a function $f: \mathbb{N} \rightarrow \mathbb{N}$ we can define $\alpha \cdot f$ as:

$$\alpha \cdot f = a_0f(0) + a_1f(1)+ \ldots + a_{k-1}f(k-1)+b$$

We are ready to define the weighted product. Let $\bm{\alpha} = \langle \alpha_0, \alpha_1, \ldots , \alpha_{m-1} \rangle$ be a tuple of weights, with $\bm{\alpha^\prime}$ being the cyclic shift $\langle \alpha_1, \alpha_2, \ldots , \alpha_{m-1}, \alpha_0 \rangle$.

Then the weighted product of $\bm{\alpha}$ with $f$, written as $\bm{\alpha} \otimes f$, is defined in the following way:

$$(\bm{\alpha} \otimes f)(0) = \alpha_0 \cdot f$$
$$(\bm{\alpha} \otimes f)(n+1) = (\bm{\alpha^\prime} \otimes S^{|\alpha_0|-1}(f))(n)$$ 

Here $S^k(f)(n) = f(n+k)$ and $|\alpha_0|$ indicates the length of the tuple $\alpha_0$. We call a weighted product $natural$ if $\bm{\alpha} \otimes f(n) \in \mathbb{N}$ for all $n$. Note that since $\bm{\alpha}$ is a finite tuple of finite tuples in $\mathbb{Q}$, we can take the LCM of all of the denominators and multiply through to make the product natural. Since this does not change the degree of the resulting function (by Lemma 4.2.1), from now on we will assume that all weight products are natural.

The following image provides a more intuitive picture of how the weight product works, by showing pictorially how to compute the weight product of the tuple of weights $\bm{\alpha} = \langle \alpha_0, \alpha_1 \rangle$ with an arbitrary function $f(n)$, where $\alpha_0 = \langle 1,3,3,7 \rangle, \alpha_1 = \langle 4,2,3 \rangle$:

\begin{enumerate}
\item[] 
\begin{tikzpicture}[grow=up]
\Tree [.{$(\bm{\alpha} \otimes f)(0) = f(0)+3f(1)+3f(2)+7$} [.{$\times 3$} $f(2)$ ] [.{$\times 3$} $f(1)$ ] [.{$\times 1$} $f(0)$ ] ]
\end{tikzpicture}
\hskip 0.1in
\begin{tikzpicture}[grow=up]
\Tree [.{$(\bm{\alpha} \otimes f)(1) = 4f(3)+2f(4)+3$} [.{$\times 2$} $f(4)$ ] [.{$\times 4$} $f(3)$ ] ]
\end{tikzpicture}
\end{enumerate}

\begin{enumerate}
\item[] 
\begin{tikzpicture}[grow=up]
\Tree [.{$(\bm{\alpha} \otimes f)(2) = f(5)+3f(6)+3f(7)+7$} [.{$\times 3$} $f(7)$ ] [.{$\times 3$} $f(6)$ ] [.{$\times 1$} $f(5)$ ] ]
\end{tikzpicture}
\hskip 0.1in
\begin{tikzpicture}[grow=up]
\Tree [.{$(\bm{\alpha} \otimes f)(3) = 4f(8)+2f(9)+3$} [.{$\times 2$} $f(9)$ ] [.{$\times 4$} $f(8)$ ] ]
\end{tikzpicture}
\end{enumerate}

Now that we have defined weight products we can proceed to state the main result that we need for the rest of the paper. The full proof of this result can be found in (Theorem 21, \cite{3}).

\begin{theorem} Let $f,g: \mathbb{N} \rightarrow \mathbb{N}$ be polynomials. Then $\langle g \rangle \geq \langle f \rangle$ if and only if there exists a tuple of weights $\bm{\alpha}$ and integers $n_0,m_0$ such that $S^{n_0}(f) = \bm{\alpha} \otimes S^{m_0}(g)$.
\end{theorem}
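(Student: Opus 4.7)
The plan is to prove both directions of the biconditional separately. The ``if'' direction ($\Leftarrow$) is a direct construction of an FST from the weight-product data, while the ``only if'' direction ($\Rightarrow$) requires a pigeonhole-and-period argument that exploits the spiralling nature of polynomials.

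For the forward direction, suppose $S^{n_0}(f) = \bm{\alpha} \otimes S^{m_0}(g)$ with $\bm{\alpha} = \langle \alpha_0, \ldots, \alpha_{m-1}\rangle$ and each $\alpha_i = \langle a_{i,0}, \ldots, a_{i,k_i-1}, b_i\rangle$. I would construct an FST $T$ in two phases. A \emph{prefix phase}, implemented with a bounded counter, emits the hard-coded word $10^{f(0)}\cdots 10^{f(n_0-1)}$ while consuming and discarding the first $m_0$ input blocks of $\langle g \rangle$. A \emph{steady-state phase} then cycles through the weights: while $T$ is reading the $k_i$ input blocks covered by weight $\alpha_i$, each input `0' in the $j$-th such block produces $a_{i,j}$ output zeros, and the input `1' that begins a new weight-cycle produces a `1' followed by $b_i$ output zeros (handling the block boundary and the constant term together). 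Unwinding the definition of $\bm{\alpha} \otimes g$ then shows the output is exactly $\langle f \rangle$, and finitely many states suffice to track the current weight index and position within it.

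For the backward direction, let $T = \langle Q, q_0, \delta, \lambda\rangle$ satisfy $T(\langle g \rangle) = \langle f \rangle$, and let $q^{(i)}$ be the state $T$ is in immediately before reading the $i$-th `1' of $\langle g \rangle$. Because $g$ is a polynomial, $g(i) \to \infty$, so for all sufficiently large $i$ the block $0^{g(i)}$ exceeds $|Q|$ in length and $T$ is forced onto a 0-cycle. Let $L$ be a common period of all 0-cycles reachable in $T$. The output produced on this block factors as $u_i v_i^{t_i} w_i$, where $u_i, v_i, w_i$ are determined by $q^{(i)}$ and $g(i) \bmod L$ and $t_i$ is an affine function of $g(i)$; a counting argument using that $f$ and $g$ share polynomial order forces $v_i$ to contain no `1' in steady state, so each input block contributes only boundedly many output `1's. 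Because polynomials are spiralling, $g(i) \bmod L$ is eventually periodic in $i$, and by pigeonhole on $Q$ the pair $(q^{(i)}, g(i) \bmod L)$ is eventually periodic with some period $P$ beginning at some index $m_0$. Reading the affine length-formulas off one such period, which consumes $P$ input blocks and produces some fixed number $m$ of complete output blocks, yields a tuple $\bm{\alpha} = \langle \alpha_0, \ldots, \alpha_{m-1}\rangle$ of weights that, together with $n_0$ set to the number of output blocks emitted during the transient phase, realizes $S^{n_0}(f) = \bm{\alpha} \otimes S^{m_0}(g)$.

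The main obstacle is the alignment argument hiding in that last paragraph: one must show that the coefficient of each $g(m_0+j)$ in the $i$-th output block's length depends only on the position of $j$ within a \emph{contiguous} window of input blocks, and that the steady-state output partitions cleanly into exactly $m$ integer-length blocks per period of $P$ input blocks. Establishing this synchronization between $T$'s internal 0-cycles and the macrostructure imposed by $\langle g \rangle$ and $\langle f \rangle$ is where the spiralling hypothesis does its real work, and is, I expect, the technical heart of the full proof in \cite{3}.
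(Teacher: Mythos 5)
The paper offers no proof of this theorem at all: it is imported as a black box from Theorem 21 of \cite{3}, so there is no in-paper argument to compare against. Your outline does follow the strategy of the cited source: an explicit two-phase FST construction for the direction from weight products to transductions, and, for the converse, an analysis of the transducer on long zero-blocks (forced $0$-cycles, ultimate periodicity of $g(i) \bmod L$ supplied by the spiralling property, and extraction of affine length formulas as the weights). As a roadmap it is the right one.

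Two points in your sketch are weaker than you present them, beyond the synchronization issue you already flag as deferred. First, in the ``if'' direction you have each input `0' emit $a_{i,j}$ output zeros, but weight entries are nonnegative \emph{rationals}; the transducer must instead consume each input block in chunks whose size is a common denominator (or the relevant cycle length), emit the correct integer number of zeros per chunk, and use the ultimate periodicity of $g$ modulo that denominator to push the leftover into the constant term $b_i$. This is not cosmetic: it is where naturality of the product and the spiralling of $g$ enter even the ``easy'' direction, and your phrasing makes that direction sound purely mechanical when it is not. Second, your reason that the cycle word $v_i$ contains no `1' --- that ``$f$ and $g$ share polynomial order'' --- is not the right justification. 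The correct argument is that if the $0$-cycle emitted a `1', then block $i$ would contribute on the order of $g(i)/L$ output ones separated by boundedly many zeros, so the output would contain arbitrarily late occurrences of $10^{e}1$ with $e$ bounded; this contradicts $\langle f \rangle$ for any unbounded $f$, and the case of bounded (constant) $f$, which corresponds to a constant weight, has to be split off and handled separately. With those repairs, and with the alignment argument you correctly identify as the technical heart actually carried out, your plan matches the proof in \cite{3}.
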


The usefulness of this theorem for comparing polynomial degrees cannot be overstated. Instead of being forced to consider all possible transductions of polynomial streams by any possible transducer, we can instead narrow our focus strictly to weight products. We also have a similar theorem from \cite{2} which allows us to narrow our focus to only polynomial degrees. The following theorem guarantees that all degrees (not just polynomial degrees) below $n^3$ are equivalent to the degree of a weight product of $n^3$. Therefore we do not miss anything by only considering weight products of $n^3$.

\begin{theorem}
Let $f : \mathbb{N} \rightarrow \mathbb{N}$ be a polynomial, and $\sigma \in 2^\mathbb{N}$. Then $\langle f \rangle \geq \sigma$ if and
only if $\sigma \equiv \langle \alpha \otimes S^{n_0}(f) \rangle$ for some integer $n_0 \geq 0$, and a tuple of weights $\alpha$.
\end{theorem}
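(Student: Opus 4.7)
The backward direction is immediate from the preliminaries. Given $\sigma \equiv \langle \bm{\alpha} \otimes S^{n_0}(f)\rangle$, Lemma 4.2.2 gives $\langle f\rangle \equiv \langle S^{n_0}(f)\rangle$, and Theorem 4.4 applied to the tautological identity $S^0(\bm{\alpha}\otimes S^{n_0}(f)) = \bm{\alpha}\otimes S^0(S^{n_0}(f))$ gives $\langle S^{n_0}(f)\rangle \geq \langle \bm{\alpha}\otimes S^{n_0}(f)\rangle$. Transitivity closes this direction.

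For the forward direction, fix an FST $T = \langle Q, q_0, \delta, \lambda\rangle$ with $T(\langle f\rangle) = \sigma$, and view the run of $T$ block by block on $\langle f\rangle = 10^{f(0)}10^{f(1)}10^{f(2)}\cdots$. Let $p_n \in Q$ be the state entered just before reading the $n$th block $10^{f(n)}$, and let $B_n \in \mathbf{2}^*$ be its output. Since $f$ is a polynomial with $f(n)\to\infty$, for all $n$ past some threshold $N_1$ we have $f(n) > |Q|$, so while scanning the $0$s of block $n$ the transducer must enter a cycle. A loop-unrolling argument then shows $B_n = u_{p_n}\cdot c_{p_n}^{t_n}\cdot v_{p_n, r_n}$, where $c_{p_n}$ is the cycle-output word, $\ell_{p_n}$ is the cycle length, $r_n = f(n) \bmod \ell_{p_n}$, the suffix $v_{p_n, r_n}$ depends only on $p_n$ and $r_n$, and the exponent $t_n$ is an affine function of $f(n)$ with coefficients determined by $p_n$.

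Now I would push the spiralling structure one level further. The successor state $p_{n+1}$ is determined by $p_n$ and $r_n$, and for the common modulus $L = \mathrm{lcm}\{\ell_p : p \in Q\}$, the residue sequence $(f(n)\bmod L)_{n\geq 0}$ is purely periodic once $n$ is large because $f$ is a polynomial. Combined with the finiteness of $Q$, this yields some $n_0 \geq N_1$ and a period $m$ for which $p_{n+m} = p_n$ and $r_{n+m} = r_n$ whenever $n\geq n_0$. Grouping the blocks $B_{n_0}, B_{n_0+1}, \ldots$ into super-blocks of length $m$, the tail of $\sigma$ after discarding $B_0\cdots B_{n_0-1}$ is a concatenation of templates whose variable content is a fixed nonnegative rational-affine combination of $f(n_0+mi), f(n_0+mi+1), \ldots, f(n_0+mi+m-1)$. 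Reading off these coefficients produces the entries $a_j$ and constant terms $b$ of a tuple of weights $\bm{\alpha} = \langle \alpha_0,\ldots,\alpha_{m-1}\rangle$ such that the sequence of gaps between successive $1$s in the tail of $\sigma$ matches $\bm{\alpha}\otimes S^{n_0}(f)$, possibly after prepending or reshuffling a bounded amount of data that can be handled by a one-state FST on each side. Since discarding a finite prefix does not change the degree, $\sigma \equiv \langle \bm{\alpha}\otimes S^{n_0}(f)\rangle$.

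The main obstacle will be the bookkeeping in the last step: verifying that the extracted coefficients really are nonnegative rationals (they are ratios of lengths of cycle-output and suffix words by cycle lengths, which works out), and handling the degenerate configurations — cycles whose output contains no $1$s, templates containing several $1$s per block, and the finite transient $B_0\cdots B_{n_0-1}$ — all of which are absorbed by the ``$\equiv$'' rather than equality in the conclusion. The rest is essentially a careful translation between the state-residue periodicity of $T$ on $\langle f\rangle$ and the combinatorial data defining a weighted product.
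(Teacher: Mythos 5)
The paper does not prove this statement; it is quoted from \cite{2}, where it is established for the larger class of spiralling functions. So there is no in-paper proof to compare against, and your proposal has to be judged on its own. The strategy you outline is indeed the standard one used in \cite{2}: pump the transducer through its cycles on the long $0$-blocks, use the fact that a polynomial is spiralling (so $f(n) \bmod L$ is ultimately periodic for the common cycle-length modulus $L$) to make the state-and-residue sequence ultimately periodic, and then group blocks into super-blocks of one period to extract a tuple of weights. The backward direction is also essentially right, though your appeal to Theorem 4.4 is slightly off target: $\bm{\alpha}\otimes S^{n_0}(f)$ need not be a polynomial when $\bm{\alpha}$ has more than one weight (it is only piecewise affine in shifted values of $f$), so Theorem 4.4 as stated does not apply; the inequality $\langle f\rangle \geq \langle \bm{\alpha}\otimes f\rangle$ should instead be obtained by directly exhibiting the transducer that realizes a weighted product, which is the easy half of Theorem 21 of \cite{3}.

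The genuine gap is in the final step of the forward direction. You assert that ``the sequence of gaps between successive $1$s in the tail of $\sigma$ matches $\bm{\alpha}\otimes S^{n_0}(f)$,'' but this is false whenever some cycle-output word $c_{p}$ contains a $1$: in that case the block $B_n = u_{p_n}c_{p_n}^{t_n}v_{p_n,r_n}$ contains roughly $f(n)/\ell_{p_n}$ many $1$s separated by \emph{bounded} gaps, so $\sigma$ is not of the form $\prod 10^{g(i)}$ for the weighted product $g$ at all. What is true is that such a tail is \emph{equivalent} to $\langle \bm{\alpha}\otimes S^{n_0}(f)\rangle$, but establishing that equivalence requires constructing FSTs in both directions between ``a super-block of templates with $1$s scattered through repeated cycles'' and ``a single $1$ followed by the corresponding affine count of $0$s,'' and these FSTs must count positions modulo the cycle lengths and remember which template they are in --- a one-state FST on each side does not suffice, and the reverse direction (reconstructing $\sigma$'s block structure from the bare number $g(i)$) is not addressed at all. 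This normalization is the technical heart of the theorem in \cite{2}, not bookkeeping that the $\equiv$ absorbs for free; as written, your argument establishes only the one inequality $\langle f\rangle \geq \sigma \Rightarrow$ ($\sigma$ is a transduct of a weighted-product stream), not the claimed equivalence.
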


We can narrow our focus even further with a new result that lets us focus on a particular subset of weight tuples, the tuples with only one weight. Note that by the definition of weight product, if we take the product of a function $f$ with a weight tuple with a single weight $\bm{\alpha} = \langle a_0, a_1, \ldots , a_{k-1},b \rangle$, it is equal to the sum $b+\sum_{i = 0}^{k-1} a_if(kn+i)$.

\begin{lemma} Let $f(n) = n^k$ and suppose $g: \mathbb{N} \rightarrow \mathbb{N}$ is a polynomial such that there exists a tuple of weights $\bm{\alpha}$ with $g = \bm{\alpha} \otimes f$. Then there exists a single weight $\beta$ such that $g = \langle \beta \rangle \otimes f$.
\end{lemma}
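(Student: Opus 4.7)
The plan is to reduce the problem to a moment-matching problem: find nonnegative $\beta_j$'s and $c$ realizing the same polynomial $g$ as a single-weight product of $f(n) = n^k$.

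First, I will set up both sides of the equation in polynomial form. Given $\beta = \langle \beta_0, \ldots, \beta_{N-1}, c\rangle$, the product $(\langle \beta \rangle \otimes f)(n) = c + \sum_{j=0}^{N-1}\beta_j(Nn+j)^k$ expands via the binomial theorem to a polynomial of degree $k$ in $n$ whose coefficient of $n^{k-\ell}$ is $\binom{k}{\ell}N^{k-\ell}T_\ell$, where $T_\ell = \sum_j \beta_j j^\ell$ denotes the $\ell$-th power-sum moment of $\beta$. Writing $g(n) = \sum_\ell Q_\ell n^\ell$ and matching coefficients, the equation $g = \langle \beta \rangle \otimes f$ reduces to the linear system $T_\ell = Q_{k-\ell}/(\binom{k}{\ell} N^{k-\ell})$ for $\ell = 0, \ldots, k-1$, together with $c = Q_0 - T_k$. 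So the task becomes: for some period $N$, find nonnegative $\beta_0, \ldots, \beta_{N-1}$ on $\{0, \ldots, N-1\}$ with these prescribed first $k$ moments, and a nonnegative $c$.

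Next, I will extract constraints on $Q$ from the multi-weight hypothesis. Writing $\bm{\alpha} = \langle \alpha_0, \ldots, \alpha_{r-1}\rangle$ with $\alpha_i = \langle a_{i,0}, \ldots, a_{i,k_i-1}, b_i\rangle$, $K = \sum_i k_i$, and $s_i = \sum_{l<i} k_l$, the definition of the weight product gives $g(rm+i) = b_i + \sum_{j=0}^{k_i-1}a_{i,j}(Km+s_i+j)^k$. Since $g$ equals the polynomial $Q$, each of these $r$ expressions must equal $Q(rm+i)$ as a polynomial in $m$; matching coefficients in each identity forces the ``generalized moments'' $\sum_j a_{i,j}(s_i+j)^\ell$ to be specific polynomials in $i$, determined by $Q$'s coefficients. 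In particular, the discrete nonnegative measure placing mass $a_{i,j}$ at position $s_i + j$, supported in $\{0, \ldots, K-1\}$, has its low-order moments entirely controlled by $Q$.

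Finally, I will construct $\beta$ using this structural data. A natural first attempt is period $N = K$, setting $\beta_j = a_{i, j - s_i}$ whenever $s_i \leq j < s_i + k_i$ (after an appropriate rescaling to account for the factor $r^{k+1}$ mismatch between the global moments of the measure above and the required $T_\ell$); alternatively, take $N = r$ and determine the $\beta_j$'s by inverting the affine system directly, using the formulas from the second paragraph. The main obstacle is verifying nonnegativity of the resulting $\beta_j$'s and of $c$. The compatibility constraints forced by $g$ being a polynomial should translate to the prescribed moment vector lying in the positive cone generated by $(1, j, \ldots, j^k)$ for $j \in \{0, \ldots, N-1\}$, but making the argument precise requires either an explicit nonnegative formula for $\beta$ in terms of the $a_{i,j}$, or a moment-problem argument (essentially verifying the relevant Hankel positivity conditions). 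For the paper's target case $k = 3$ only three low-order moments must be matched, so the argument can be made concrete by analyzing the possible tuple shapes; the general case then follows by similar but more elaborate bookkeeping.
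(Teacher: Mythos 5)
Your reduction to a moment-matching problem is a correct reformulation of the statement, but the proposal stops exactly where the real work begins: you never establish that the prescribed moment vector is realizable by a \emph{nonnegative} tuple $\beta$, and you explicitly defer this to ``an explicit nonnegative formula'' or ``a moment-problem argument'' that you do not supply. Neither of your two candidate constructions closes the gap. The attempt with $N=K$ and $\beta_j = a_{i,j-s_i}$ pours the coefficients of \emph{every} weight $\alpha_i$ into a single period, but the weight product only uses $\alpha_i$ on the residue class $n \equiv i \pmod r$, so this overcounts by roughly a factor of $r$ and, more fatally, places the masses at the wrong abscissae ($s_i+j$ rather than at the positions a single-weight product of period $K$ would need); a global rescaling of the masses cannot repair wrong positions. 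The alternative of ``inverting the affine system directly'' produces a $\beta$ with no reason to be nonnegative. And the truncated discrete moment problem you fall back on is genuinely constrained: an arbitrary moment vector need not lie in the cone generated by $(1,j,\dots,j^{k})$ for $j\in\{0,\dots,N-1\}$, so you would still have to prove membership in that cone --- which is essentially the whole content of the lemma, not bookkeeping.

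The paper's proof supplies the missing construction explicitly, and the key idea is the homogeneity of $f(n)=n^k$. Restrict to the residue class $n \equiv 0 \pmod m$, where only the first (WLOG shortest) weight $\alpha_0 = \langle a_0,\dots,a_p,b\rangle$ is in play: there $(\bm{\alpha}\otimes f)(n) = \sum_i a_i f(Ln/m+i)+b$ with $L=\sum_i(|\alpha_i|-1)$. Homogeneity gives $f(Ln/m+i) = m^{-k}f(Ln+im)$, so the single weight $\beta$ of length $L$ with entry $a_i/m^k$ in position $im$ (zeros elsewhere, constant term $b$, and $pm < L$ because $\alpha_0$ was shortest) satisfies $(\langle\beta\rangle\otimes f)(n) = (\bm{\alpha}\otimes f)(n)$ on that residue class; since both sides are polynomials agreeing on an infinite set, they agree everywhere. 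Nonnegativity of $\beta$ is then free, because its entries are the original $a_i$ times a positive scalar --- an explicit formula rather than a positivity criterion. That stretch-by-$m$-and-rescale step is precisely what your proposal is missing.
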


\begin{proof} Let $f$, $g$, $\bm{\alpha}$ be as above, and suppose $\bm{\alpha} = \langle \alpha_0, \alpha_1, \ldots , \alpha_{m-1} \rangle $ has more than one weight. By shifting $g$ as necessary, we can assume WLOG that $\alpha_0 = \langle a_0, a_1, \ldots , a_p,b \rangle$ has the shortest length of all of the weights in $\bm{\alpha}$. Let $L = \sum_{i = 0}^{m-1} (|\alpha_ i| - 1)$.

We have for all $n \equiv 0$ mod $m$, 
$$(\bm{\alpha} \otimes f)(n) = a_0f(\frac{Ln}{m})+a_1f(\frac{Ln}{m}+1)+ \cdots +a_pf(\frac{Ln}{m}+p)+b$$
$$= a_0(\frac{Ln}{m})^k+a_1(\frac{Ln}{m}+1)^k+ \cdots +a_p(\frac{Ln}{m}+p)^k+b$$
$$= \frac{1}{m^k}(a_0(Ln)^k+a_1(Ln+m)^k+ \cdots +a_p(Ln+pm)^k+bm^k)$$

Now by assumption, $\alpha_0$ had the shortest length of the weights in $\bm{\alpha}$, which means that if we take this length (which is $p+1$) times the number of weights $m$, this must be less than or equal to $L$. Therefore we can construct the following weight $\beta$: \\

$\beta = \frac{1}{m^k}(a_0, 0, 0, \cdots , a_1, 0, 0, \cdots, 0, 0, a_p, 0, 0, \cdots, 0, bm^k)$ \\

where each $a_i$ is in position $im+1$, and after $a_p$ we have $L - (pm+1)$ zeroes, so that this weight has length L. Then for all $n \equiv 0$ mod $m$, $(\langle \beta \rangle \otimes f)(n) = (\langle \alpha \rangle \otimes f)(n) = g(n)$. However, all three expressions in this equation are polynomials. Since they are equal for infinitely many $n$, they must be equal for all $n$, and thus $\beta$ is indeed the desired weight.
\end{proof}

Now we are ready to introduce a key definition that forms the basis for all of the results in this paper. 

\begin{definition} Let $\alpha = \langle a_0, a_1, \ldots , a_{k-1},b \rangle$ be a weight with $a_i > 0$ for exactly $m$ $a_i$'s. Then we call the weight product of $\bm{\alpha} = \langle \alpha \rangle$ with a function $f: \mathbb{N} \rightarrow \mathbb{N}$ an \textit{m-transform} of $f$. We will often write $\alpha \otimes f$ instead of $\bm{\alpha} \otimes f$, since Lemma 4.6 guarantees that for this paper we only need to consider 1-element tuples of weights.
\end{definition}

With this definition, we could restate Lemma 4.6 as "All degrees below $\langle n^k \rangle$ are equivalent to an $m$-transform of a shift of $n^k$ for some $m$ $\in \mathbb{N}$".
For this paper we only need to consider the case $k = 3$, and thus if we are interested in classifying all degrees below $\langle n^3 \rangle$, we need to classify the degrees of all $m$-transforms of shifts of $n^3$.

In fact, we can classify almost all $m$-transforms of $n^3$ in one theorem, a generalization of Theorem 32 in \cite{4}. We first need to introduce a couple of definitions from \cite{3}:

\begin{definition} Let $f(n) = a_kn^k+\cdots+a_1n+a_0$ be a polynomial. We define $V(f(n))$ to be the column vector $\begin{bmatrix} a_1 \hspace{.1cm} a_2 \hspace{.1cm} \cdots \hspace{.1cm} a_k \end{bmatrix}^T$. We also define for a weight $\alpha = \langle \alpha_0, \alpha_1, \cdots, \alpha_{m-1},b \rangle$ another column vector $U(\alpha)$, given by $U(\alpha) = \begin{bmatrix} \alpha_0 \hspace{.1cm} \alpha_1 \hspace{.1cm} \cdots \hspace{.1cm} \alpha_{m-1} \end{bmatrix}^T$. Finally, we define $M_m(f(n))$ to be the $k$ by $m$ matrix whose $i$th column is given by $V(f(mn+i-1))$.
\end{definition}

The key feature of these definitions is that it allows us to rewrite the weight product as a matrix product in the following sense:

$$M_m(f(n))U(\alpha) = V((\alpha \otimes f)(n))$$

The proof can be done by checking the definition of a weight product (see Lemma 30 in \cite{4} for more details). Now we need a few more lemmas to set up our first theorem on $m$-transforms.

\begin{lemma} For all $m,a,b,c \in \mathbb{N}$, with $a,b,c$ pairwise distinct and $m \neq 0$, the vectors $V((mn+a)^3),V((mn+b)^3),V((mn+c)^3)$ form an invertible matrix.
\end{lemma}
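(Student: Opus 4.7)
The plan is to compute the three column vectors explicitly, assemble them into a $3 \times 3$ matrix, and then show that its determinant is a nonzero scalar multiple of a Vandermonde determinant on the points $a, b, c$.

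First I would expand $(mn+x)^3 = m^3 n^3 + 3m^2 x \, n^2 + 3 m x^2 \, n + x^3$ for a generic parameter $x$, so that by the definition of $V$,
\[
V\bigl((mn+x)^3\bigr) \;=\; \begin{bmatrix} 3mx^2 \\ 3m^2 x \\ m^3 \end{bmatrix}.
\]
Specializing this at $x=a,b,c$ gives the three columns of the matrix $M$ in the statement. Next I would pull out the common factor from each row, writing
\[
M \;=\; \operatorname{diag}(3m,\,3m^2,\,m^3)\,\begin{bmatrix} a^2 & b^2 & c^2 \\ a & b & c \\ 1 & 1 & 1 \end{bmatrix}.
\]

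The matrix on the right is (up to a row reversal) a Vandermonde matrix in $a,b,c$, whose determinant equals $\pm(a-b)(a-c)(b-c)$. Since $a,b,c$ are pairwise distinct, this factor is nonzero; and since $m \neq 0$, the diagonal matrix is also invertible. Multiplying determinants, $\det M = \pm\, 9 m^6 (a-b)(a-c)(b-c) \neq 0$, so $M$ is invertible.

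There is no real obstacle here: the whole argument rests on the observation that reading the coefficients of $(mn+x)^3$ into $V$ produces, after stripping a row-wise scaling by powers of $m$, precisely the rows of a Vandermonde matrix in the translation parameter. The only thing worth double-checking is that the definition of $V$ orders the coefficients starting from the linear term (so the three rows of $M$ come out as $x^2, x, 1$), which is what makes the Vandermonde structure appear.
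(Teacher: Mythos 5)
Your proof is correct and takes the same approach as the paper: the paper simply asserts that the determinant is $9m^6(a-b)(b-c)(a-c)$ and concludes, while you supply the underlying computation by factoring the matrix into $\operatorname{diag}(3m,3m^2,m^3)$ times a (row-reversed) Vandermonde matrix in $a,b,c$. Your reading of the definition of $V$ (coefficients from the linear term up, constant term discarded) is also the intended one.
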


\begin{proof} The determinant of this matrix is $9m^6(a-b)(b-c)(a-c)$, which is indeed nonzero if $a,b,c$ are distinct and $m \neq 0$ as assumed.
\end{proof}

\begin{lemma} Let q(n) be a polynomial of order k. Then for every $\epsilon > 0$ we have $\langle q(n) \rangle \geq \langle n^k+b_{k-1}n^{k-1}+ \cdots + b_1n \rangle$ for some rational coefficients $0 \leq b_{k-1}, \cdots, b_1 < \epsilon$.
\end{lemma}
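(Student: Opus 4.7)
The plan is to obtain the target polynomial from $q$ by applying, in sequence, three transformations that all preserve (or decrease) transducer degree: a large translation of the input, a linear rescaling of the input, and a scalar normalization together with removal of the constant term. Each of these is handled by a separate part of Lemma 4.2, so no weight products beyond those already built into Lemma 4.2 are needed. Write $q(n) = a_k n^k + a_{k-1} n^{k-1} + \cdots + a_0$ with $a_k > 0$.

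The first, and only delicate, step is to force the lower-order coefficients to be nonnegative. Using Lemma 4.2.2 I would replace $q$ with the equivalent stream $\tilde q(n) := q(n+N)$, whose coefficients are
\[
c_j \;=\; \sum_{i=j}^{k} a_i \binom{i}{j} N^{\,i-j}, \qquad j = 0, 1, \ldots, k.
\]
For $j \geq 1$ the dominant term as $N \to \infty$ is the positive quantity $a_k \binom{k}{j} N^{k-j}$, so picking a single sufficiently large integer $N$ simultaneously makes $c_1, c_2, \ldots, c_k$ all strictly positive. The sign of $c_0$ is immaterial, since the constant term will be discarded at the end by Lemma 4.2.3. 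This is the main technical point of the proof: every other step is formal, but the positivity of $c_1, \ldots, c_{k-1}$ is precisely what the conclusion requires of the $b_j$, and it is what forces us to use a shift before the rescaling rather than, say, trying to build the target directly through a clever weight.

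From there the rest is mechanical. Apply Lemma 4.2.4 to obtain $\langle \tilde q(n)\rangle \geq \langle \tilde q(mn)\rangle$ for any positive integer $m$; expanding gives $\tilde q(mn) = \sum_{j=0}^k c_j m^j n^j$. Using Lemma 4.2.1, together with the rational-coefficient convention following Lemma 4.2, we may pass to the degree-preserving monic form
\[
P(n) \;=\; n^k + \sum_{j=1}^{k-1} \frac{c_j}{c_k\, m^{\,k-j}}\, n^j + \frac{c_0}{c_k\, m^k}.
\]
Setting $b_j := c_j/(c_k\, m^{k-j})$ for $j = 1, \ldots, k-1$ gives a positive rational, and since $k-j \geq 1$ we have $b_j \leq (\max_{1 \leq i \leq k-1} c_i)/(c_k\, m)$ for every such $j$, so taking $m$ large enough renders every $b_j$ smaller than $\epsilon$. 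Finally, Lemma 4.2.3 lets us drop the constant term, yielding $\langle q\rangle \equiv \langle \tilde q\rangle \geq \langle P\rangle \equiv \langle n^k + b_{k-1} n^{k-1} + \cdots + b_1 n\rangle$ with $0 < b_j < \epsilon$, as required.
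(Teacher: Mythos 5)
The paper does not actually prove this lemma itself; it is imported from \cite{3} with only the remark that the resulting polynomial is denoted $q_\epsilon(n)$, so there is no in-paper argument to compare against. Your proof is correct and self-contained, and it follows the natural route one would expect the cited source to take: the shift $n \mapsto n+N$ (Lemma 4.2.2) makes $c_1,\ldots,c_{k-1}$ strictly positive because each is dominated by $a_k\binom{k}{j}N^{k-j}$ while $c_k=a_k$ is already positive; the dilation $n \mapsto mn$ (Lemma 4.2.4) then drives $b_j = c_j/(c_k m^{k-j})$ below $\epsilon$ since $k-j \geq 1$; and the choice of $N$ precedes and is independent of the choice of $m$, so there is no circularity. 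The only point worth flagging is cosmetic: Lemmas 4.2.1 and 4.2.3 are stated for natural scalars and natural additive constants, so when you divide through by $c_k m^k$ and discard a rational constant term you are really invoking the convention paragraph following Lemma 4.2 (clear denominators by an integer scalar, then apply the lemma) rather than the lemma verbatim; this is exactly how the paper itself uses these facts elsewhere, so the argument stands as written.
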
 

We refer to the polynomial $n^k+b_{k-1}n^{k-1}+ \cdots + b_1n$ that can be found by this lemma as $q_{\epsilon}(n)$, as in \cite{3}. The following lemma is due to Robert Israel in \cite{5}.

\begin{lemma}
Let M be a 3 by n matrix of rank 3 with $n \geq 3$, let $U$ be an n by 1 vector with the first three entries strictly positive and all other entries nonnegative, and let $MU = P$. Let $M_{\epsilon}$ be a matrix the same size as M which has all entries within $\epsilon$ of M. Then there exists a vector $U_{\epsilon}$ such that $M_{\epsilon}U_{\epsilon} = P$, and $U_{\epsilon}$ has the first three entries strictly positive, and all entries nonnegative.
\end{lemma}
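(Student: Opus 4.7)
The plan is to hold the last $n-3$ coordinates of $U_\epsilon$ fixed at the corresponding values in $U$ and recover the first three coordinates by inverting a $3 \times 3$ linear system. Partition $M = [M_1 \mid M_2]$, with $M_1 \in \mathbb{R}^{3\times 3}$ the submatrix of the first three columns and $M_2 \in \mathbb{R}^{3\times(n-3)}$ the remaining columns, and partition $U$ conformally into $U_1 = (u_1,u_2,u_3)^T$ and $U_2 = (u_4,\ldots,u_n)^T$, so that $MU = P$ becomes $M_1 U_1 = P - M_2 U_2$. The crucial structural fact, which in every application below is supplied by Lemma 4.9, is that the first three columns of $M$ are linearly independent, so $M_1$ is invertible; by continuity of the determinant, the analogous block $M_{1,\epsilon}$ of $M_\epsilon$ is also invertible for all sufficiently small $\epsilon$, and its inverse depends continuously on its entries.

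With this in hand, set $U_{2,\epsilon} := U_2$ and $U_{1,\epsilon} := M_{1,\epsilon}^{-1}(P - M_{2,\epsilon} U_2)$. Then $M_\epsilon U_\epsilon = M_{1,\epsilon} U_{1,\epsilon} + M_{2,\epsilon} U_2 = P$ by construction, and the last $n-3$ entries of $U_\epsilon$ coincide with those of $U$ and so are nonnegative. As $\epsilon \to 0$, continuity forces $U_{1,\epsilon} \to M_1^{-1}(P - M_2 U_2) = U_1$, which has strictly positive entries by hypothesis, so the first three entries of $U_{1,\epsilon}$ are strictly positive for all sufficiently small $\epsilon$.

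The main obstacle is identifying an invertible $3\times 3$ submatrix whose column indices correspond to entries of $U$ with enough slack to absorb the perturbation. Using the first three columns works precisely because those entries of $U$ are strictly positive by assumption, so small changes cannot push them negative; if one were instead forced to use some column $j>3$ as a basic index where $u_j = 0$, the corresponding entry of $U_\epsilon$ could slip below zero and the construction would fail. Fortunately, Lemma 4.9 guarantees that the first three columns of $M_m(n^3)$ are always linearly independent, so this potential obstruction never arises in the applications that follow.
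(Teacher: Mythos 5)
Your proof is correct and follows essentially the same route as the paper's: partition $M$ into an invertible $3\times 3$ block plus the rest, freeze the tail of $U$, solve for the head via the perturbed inverse, and use continuity to keep the first three entries positive. If anything you are more careful than the paper, which merely says ``WLOG the first three columns are linearly independent'' --- your closing paragraph correctly identifies that the lemma actually needs this as a hypothesis (it can fail otherwise) and that Lemma 4.9 supplies it in every application.
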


\begin{proof}

Since $M$ has rank 3, we can assume WLOG that the first three columns are linearly independent. Then we can write $M$ in block form as $[B \mid C]$, where $B$ is an invertible 3 by 3 matrix, and $C$ is everything else. We can also split $U$ into blocks, and writing these blocks in terms of $B,C,P$ gives us $U = [B^{-1}(P - CW) \mid W]^T$.

Now if $M_{\epsilon} = [B_{\epsilon} \mid C_{\epsilon}]$ is close enough to $M$, $B_{\epsilon}$ is still invertible, and we can define a vector $U_{\epsilon} = [B_{\epsilon}^{-1}(P - C_{\epsilon}W) \mid W]^T$, and by construction $M_{\epsilon}U_{\epsilon} = P$. Further, since $U_{\epsilon}$ converges to $U$, the first three entries can be made strictly positive with a sufficiently small $\epsilon$, and its other entries (the "W" block) are equal to the entries of $U$ and thus nonnegative.  

\end{proof}

\section{Classifying Cubic Degrees}

Now we are ready to begin classifying all cubic degrees below $n^3$. We begin by classifying almost all $m$-transforms of $n^3$ in one result.

\begin{theorem} Let $m \geq 3, t \in \mathbb{Z}$. Then the degrees of every $m$-transform of $(n+t)^3$ are all equal, and every cubic polynomial can be transduced to this degree. We refer to this degree as the bottom degree for cubics.
\end{theorem}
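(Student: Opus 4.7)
The strategy is to prove a single slightly stronger statement that yields both halves of the theorem at once: for any cubic polynomial $q$ and any $m$-transform $g$ of $(n+t)^3$ with $m \geq 3$ and $t \in \mathbb{Z}$, one has $\langle q \rangle \geq \langle g \rangle$. Equality of the degrees of all $m$-transforms then follows by applying this claim with $q$ itself an $m$-transform and swapping the roles of source and target, while the statement that every cubic reaches this degree is immediate.

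To prove the stronger claim, I would first invoke Lemma 4.11 to reduce $q$ to $q_\epsilon(n) = n^3 + b_2 n^2 + b_1 n$ with $0 \leq b_1, b_2 < \epsilon$; it then suffices to establish $\langle q_\epsilon \rangle \geq \langle g \rangle$ for all sufficiently small $\epsilon > 0$. By Theorem 4.4 and Lemma 4.6, this reduces to producing a single weight $\beta$ such that, after appropriate shifts, $g = \beta \otimes q_\epsilon$. Through the matrix identity following Definition 4.9, together with the shift invariance in Lemma 4.2, this translates into finding, for some $\ell \geq 3$ and some shift $s$ with $g^{(s)}(n) = g(n + s)$, a nonneg vector $U$ whose first three entries are strictly positive and satisfying $M_\ell(q_\epsilon)\,U = V(g^{(s)})$.

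The technical step is then to handle the base case $\epsilon = 0$ and perturb via Lemma 4.12. At $\epsilon = 0$ the columns of $M_\ell(n^3)$ are $V((\ell n + i - 1)^3) = (3\ell(i-1)^2,\, 3\ell^2(i-1),\, \ell^3)^T$, all lying in the closed nonneg orthant. Choosing $s$ large enough forces every entry of $V(g^{(s)})$ to be strictly positive, since the leading coefficient of $g$ is positive. Lemma 4.10 then guarantees that for $\ell \geq 3$ the first three columns form an invertible matrix, so I may assign small positive weights to these three columns to solve the linear system and absorb any residual in the third coordinate using the column $V((\ell n)^3) = (0,0,\ell^3)^T$. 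This yields a nonneg $U_0$ with first three entries strictly positive and satisfying $M_\ell(n^3)\,U_0 = V(g^{(s)})$. Lemma 4.12 then transfers this solution to small $\epsilon > 0$, producing the desired weight $\beta_\epsilon$.

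The main obstacle is making the base-case construction rigorous: one must simultaneously hit the target vector exactly, keep the first three entries of $U$ strictly positive, and keep the remaining entries nonneg. The strict positivity of $V(g^{(s)})$ for large $s$, together with the freedom to choose $\ell$ and distribute mass across many columns, should reduce this to an elementary cone-containment argument, but the precise choice of $\ell$ and the allocation of coefficients across the first few columns requires some care to verify.
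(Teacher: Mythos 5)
Your high-level frame is the same as the paper's: reduce the arbitrary cubic $q$ to $q_\epsilon$ (Lemma 4.10), translate ``$q_\epsilon$ transduces to the $m$-transform $g$'' into a matrix equation via Definition 4.8, and obtain the required nonnegative weight vector from the perturbation lemma (Lemma 4.11), with Lemma 4.9 covering the ``which three entries are positive'' issue. The divergence, and the genuine gap, is in how you produce the unperturbed ($\epsilon=0$) solution that Lemma 4.11 is supposed to perturb. The paper gets this for free: $g(n)=\sum_{i}a_i(mn+t+i)^3$ is \emph{by definition} a weight product of $(n+t)^3$, so $M_m((n+t)^3)\,U(\alpha)=V(g)$ holds exactly with $U(\alpha)=(a_0,\dots,a_{m-1})^T$ nonnegative and three entries strictly positive; one then perturbs $M_m((n+t)^3)$ to $M_m(q_\epsilon(n+t))$ while keeping the \emph{same} right-hand side $V(g)$. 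You instead perturb around $M_\ell(n^3)$ and try to build a nonnegative solution of $M_\ell(n^3)\,U=V(g(n+s))$ by hand, arguing that for large $s$ the target vector is entrywise positive and can therefore be hit.

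That step fails. Entrywise positivity of $V(g^{(s)})$ is not sufficient for it to lie in the cone generated by the columns $V((\ell n+j)^3)=(3\ell j^2,\,3\ell^2 j,\,\ell^3)^T$, $j=0,\dots,\ell-1$; that cone is a proper subcone of the positive orthant (its cross-section is the convex hull of $\ell$ points on a parabola). Concretely, comparing leading and $n^2$-coefficients in $\sum_j c_j(\ell n+j)^3=\sum_i a_i(mn+ms+t+i)^3$ forces $\sum_j c_j j/\sum_j c_j=(\ell/m)\,\overline{u}$, where $\overline{u}$ is the $a_i$-weighted average of the arguments $ms+t+i$; the left side is at most $\ell-1$, while for $s\ge 1$ the right side is at least $\ell$. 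So taking $s$ large pushes the target \emph{out} of the cone rather than into it, and the ``absorb the residual in the third coordinate with $(0,0,\ell^3)^T$'' device controls only one of the three constraints and cannot repair signs in the first two. One could try to rescue your construction by also shifting the source and choosing $\ell$ relative to $m$, but the cleaner and intended route is the paper's: take the base solution straight from the definition of the $m$-transform and apply Lemma 4.11 with $M=M_m((n+t)^3)$, $M_\epsilon=M_m(q_\epsilon(n+t))$, $P=V(g)$.
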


\begin{proof} 
The basic idea of this proof is that we want to narrow our focus to only three columns of our matrix (and three nonzero rows of our vector) in order to replicate the proof method from Theorem 32 in \cite{4}.

Let $f(n) = \sum_{i=0}^{m-1} a_i(mn+t+i)^3$, where each $a_i$ is nonnegative, and at least 3 are strictly positive. By definition we have $M_m((n+t)^3)U(\alpha) = V(f(n))$, where $\alpha$ is the weight $\langle a_0,a_1,...,a_{m-1},0 \rangle$. We will assume for now that it is the first three $a_i$'s which are nonnegative, and justify this assumption later. 

We want to prove that any cubic polynomial can be transduced into $f(n)$. Let $q(n)$ be an arbitrary cubic polynomial, and let $q_{\epsilon}(n)$ be as defined in Lemma 4.9. Consider the matrix $M_m(q_{\epsilon}(n+t))$. We have the right conditions to apply Lemma 4.11 and obtain a vector $U_{\epsilon}$ as stated in the lemma. In particular, we have that $M_m(q_{\epsilon}(n+t))U_{\epsilon}= V(f(n))$ and $U_{\epsilon}$ has all nonnegative entries. But this means that $U_{\epsilon}$ is not just a vector but in fact a weight vector, since we can form a weight out of the entries of $U_{\epsilon}$. Then the equation $M_m(q_{\epsilon}(n+t))U_{\epsilon}= V(f(n))$ corresponds to a weight product, and in fact what it says is that $q_{\epsilon}(n+t)$ can be transduced into $f(n)$ via the weight corresponding to the vector $U_{\epsilon}$. Combining this with the way $q_{\epsilon}$ was defined, we have for an arbitrary cubic polynomial $q(n)$, $\langle q(n) \rangle \equiv \langle q(n+t) \rangle \geq \langle q_{\epsilon}(n+t) \rangle \geq \langle f(n) \rangle$. Therefore $f(n)$ is in the bottom degree for cubics.

However, we do still have to justify why we can assume WLOG that our weight had the first three entries strictly positive. In general, the nonzero entries of the weight could be in any location. But the fundamental argument behind Lemma 4.11 works the same regardless, as long as an invertible matrix can be formed by the columns which correspond to the nonzero entries of the vector $U$ (i.e. if $U$'s strictly positive entries were in the $ith, jth, kth$ positions, the $ith, jth, kth$ columns of $M$ need to be linearly independent). If the nonzero entries are in different places we simply have a messier block matrix to draw and more details to write out. And because Lemma 4.9 guarantees that any 3 columns of $M_m((n+t)^3)$ can form an invertible 3 by 3 matrix, we are justified in using our assumption to simplify the proof.

\end{proof}

We also note that this proof would generalize fairly naturally to prove the following statement: "All $m$-transforms of $(n+t)^k$ are in the bottom degree for polynomials of order $k$, for $m \geq k$".

So now we have completely eliminated all but two cases: 1-transforms and 2-transforms of $n^3$. Next we prove that the 2-transforms are also in the bottom degree.

\begin{theorem} All 2-transforms of $n^3$ are in the bottom degree for cubics.
\end{theorem}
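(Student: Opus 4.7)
The plan is to show that every 2-transform of $n^3$ is equal, as a polynomial up to an additive constant, to some 3-transform of $n^3$; once such a representation is in hand, Lemma 4.2 makes the two equivalent in degree, and Theorem 5.1 places the 3-transform (and therefore $g$) in the bottom degree for cubics, finishing the proof.

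Concretely, write the 2-transform as $g(n) = a(kn+i)^3 + b(kn+j)^3 + c$ with $a, b > 0$ and $0 \leq i < j \leq k - 1$. I would work on the refined grid of length $L = Nk$ for a large integer $N$, and try the 3-transform determined by the three positions $s_1 = iN$, $s_2 = jN - 1$, $s_3 = jN + 1$, which all lie in $\{0, 1, \ldots, L-1\}$ as soon as $N \geq 2$. Requiring equality of the coefficients of $n^3$, $n^2$, and $n$ in $g$ and in $h(n) = a_1(Ln+s_1)^3 + a_2(Ln+s_2)^3 + a_3(Ln+s_3)^3$ yields a $3 \times 3$ linear system for $a_1, a_2, a_3$ whose coefficient matrix is the submatrix of $M_L(n^3)$ on the three chosen columns, and is therefore invertible by Lemma 4.9. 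So $(a_1, a_2, a_3)$ is uniquely determined, and any remaining mismatch in the constant term disappears by Lemma 4.2.

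The technical heart of the proof is to check that this unique solution is componentwise strictly positive, so that $h$ is genuinely a 3-transform and not a degenerate 2- or 1-transform. Geometrically, $g$ corresponds to a two-atom discrete measure on $\{i/k, j/k\}$ with masses $(a,b)$, and the goal is to match its first three moments by a three-atom measure on the finer grid $\{0, 1/L, \ldots, (L-1)/L\}$ using only positive weights. The choice of $s_2, s_3$ symmetric about $jN$ splits the mass at $j/k$ cleanly in two, which makes $a_2, a_3$ come out positive as soon as $N(j-i) > 1$; the coefficient $a_1$ is controlled by a single inequality of the form $N^2(j-i)^2 > 1 + b/a$, which is satisfied for all $N$ beyond a threshold depending only on $a, b, j-i$. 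Fixing any such $N$ produces the required 3-transform of $n^3$ equal to $g$ up to a constant, and Theorem 5.1 then closes the argument.
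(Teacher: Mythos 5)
Your argument is correct, and the positivity thresholds you state check out: solving the $3\times 3$ system with atoms at $iN$, $jN-1$, $jN+1$ gives (after absorbing the factor $N^3$) coefficients $A_2 = \frac{b(j-i)}{2(j-i-1/N)}$, $A_3 = \frac{b(j-i)}{2(j-i+1/N)}$, and $A_1 = a - \frac{b/N^2}{(j-i)^2-1/N^2}$, so $a_2,a_3>0$ exactly when $N(j-i)>1$ and $a_1>0$ exactly when $N^2(j-i)^2>1+b/a$, as you claim. The paper follows the same high-level strategy --- rewrite the 2-transform as a positive combination of three cubes on a refined grid and invoke Theorem 5.1 --- but executes it differently: it verifies an explicit identity $n^3+d(n+j)^3 = a_1(kn+jk-1)^3+a_2(kn-k+1)^3+a_3(kn)^3$ with closed-form $a_i$, placing two of the three new atoms near $0$ (at positions $0$ and $1-k$) instead of splitting the atom at $j$ symmetrically. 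Because the position $1-k$ is negative, the paper must re-center and split into the cases $r>0$ and $r=0$, the latter landing on a 3-transform of $(n-k)^3$ rather than of $n^3$. Your placement $\{iN,\, jN-1,\, jN+1\}$ keeps all three positions inside $\{0,\ldots,L-1\}$ for every $N\ge 2$, so you avoid the case split entirely and land directly on a 3-transform of $n^3$; and by obtaining the coefficients from the invertible system of Lemma 4.9 rather than exhibiting them, you trade the paper's explicit formulas for a short positivity computation. Both routes rely on the same external inputs: Theorem 5.1 to finish, and Lemma 4.2 (part 3) to dispose of the constant-term mismatch. For a fully written-out version you should still record the solved values of $a_1,a_2,a_3$ (or the computation above), since the positivity of the unique solution is the one step that does not follow from invertibility alone.
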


\begin{proof} We first observe that all 2-transforms of $n^3$ are of the form $a(pn+r)^3+b(pn+s)^3$, with $a,b$ both positive rational numbers, $0 \leq r,s < p, r \neq s$, and $p,r,s \in \mathbb{N}$.

We need the following equality: 
$$n^3+d(n+j)^3 = a_1(kn+jk-1)^3+a_2(kn-k+1)^3+a_3(kn)^3$$
where $a_1 = \frac{dj(-1+k+jk)}{k^2(jk-1)(jk+k-2)}, a_2 = \frac{dj}{(k+1)k^2(jk+k-2)}, a_3 = \frac{1}{k^3}(1-\frac{d(jk+k-1)}{(k-1)(jk-1)})$\\

This can be easily verified, and holds for all $d,j,k$. We also observe that if $d$ and $j$ are fixed positive numbers, for a large enough value of $k$ each $a_i$ is strictly positive.

Now we proceed to the proof, which essentially consists of applying algebraic operations to this equation to end up with a 2-transform on the left hand side, and a 3-transform (or higher) on the right hand side. Then applying Thm 5.1 lets us conclude both sides of the equation are in the bottom degree for cubics. \\

Let $a(pn+r)^3+b(pn+s)^3$ be a 2-transform of $n^3$. Assume WLOG $s > r$.

We have two cases.

Case 1: $r > 0$

If r $>$ 0, in the previous equation let $d = \frac{b}{a}$, let $j = s - r$, and choose an integer k large enough so each $a_i$ is positive.Then on both sides of the equation replace $n$ with $(n+r)$ to obtain:\\

$(n+r)^3+d(n+j+r)^3 = a_1(kn+rk+jk-1)^3+a_2(kn+rk-k+1)^3+a_3(kn+rk)^3$\\

Now we use that $r+j = s$, and also replace $n$ with $pn$:\\

$(pn+r)^3+d(pn+s)^3 = a_1(pkn+sk-1)^3+a_2(pkn+(r-1)k+1)^3+a_3(pkn+rk)^3$\\

Multiplying through by $a$ gives us what we want on the left hand side:\\

$a(pn+r)^3+b(pn+s)^3 = aa_1(pkn+sk-1)^3+aa_2(pkn+(r-1)k+1)^3+aa_3(pkn+rk)^3$\\

Now let's look at the right hand side. Since $p > s$ and $s > 1$, $pk > sk-1 > 0$. Since $r > 0$ and $r$ is an integer, $r \geq 1$, so $pk > (r-1)k+1 > 0$. Finally, $p > r$ so $pk > rk > 0$.

Then we can rewrite the right hand side as the following: $\sum_{i=0}^{pk-1} c_i(pkn+i)^3$, where $c_i = 0$ for all i except for $c_{sk-1} = aa_1, c_{(r-1)k+1} = aa_2, c_{rk} = aa_3$. Since $a$ and the $a_i$'s are all positive, this is 3-transform of $n^3$. Therefore it is in the bottom degree, and so is the left hand side.\\

Case 2: $r = 0$ 

This will proceed similarly to the first case, except at the end we get a transformation of $(n-k)^3$. Again let $d = \frac{b}{a}$, $j = s$, and choose an integer k large enough so each $a_i$ is positive. We have:\\

 $n^3+d(n+s)^3 = a_1(kn+sk-1)^3+a_2(kn-k+1)^3+a_3(kn)^3$\\

Now rewrite the right hand side as follows:\\

$n^3+d(n+s)^3 = a_1(kn+(sk+k-1)-k)^3+a_2(kn+1-k)^3+a_3(kn+k-k)^3$\\

Multiply both sides by $a$ and replace $n$ with $pn$ to obtain the desired left hand side:\\

$a(pn)^3+b(pn+s)^3 = aa_1(pkn+(s+1)k-1-k)^3+aa_2(pkn+1-k)^3+aa_3(pkn+k-k)^3$\\

Again we examine the right hand side. Since $p > s$ and both are integers, $p \geq s+1$ and thus $pk > (s+1)k-1 > 0$, and we also have $pk > 1 > 0$, $pk > k > 0$. So we can rewrite the right hand side into the following sum: $\sum_{i=0}^{pk-1} c_i(pkn+i-k)^3$, where $c_i = 0$ for all i except for $c_{(s+1)k-1} = aa_1, c_{1} = aa_2, c_{k} = aa_3$. Since $a$ and the $a_i$'s are all positive, this is a 3-transform of $(n-k)^3$. Therefore it is in the bottom degree by Theorem 5.2, and so is the left hand side.
\end{proof}

Now we have shown that everything except the 1-transforms is equal to the bottom degree for cubics. Fortunately, the 1-transforms are not in the bottom degree, and we will show that their degrees form an interesting structure. First we prove that they are not in the bottom degree.

\begin{theorem} Every 1-transform of $n^3$ is strictly above the bottom degree for cubics.
\end{theorem}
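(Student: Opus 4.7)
The $\geq$ direction is immediate from Theorem 5.1, since any 1-transform $g = a(Pn+I)^3 + b$ is a cubic polynomial. So the plan is to establish strictness: no representative $h$ of the bottom degree can be transduced into a 1-transform. I would fix $h(n) = n^3 + (n+1)^3 + (n+2)^3$ and, for contradiction, assume $\langle h \rangle \geq \langle g \rangle$ for some 1-transform $g$, which by Lemma 4.2 I may take as $g(n) = (Pn+I)^3$ with $P \geq 1$. Theorem 4.4 then supplies shifts $n_0, m_0$ and a tuple of weights $\bm{\alpha} = \langle \alpha_0, \ldots, \alpha_{M-1}\rangle$ with $S^{n_0}(g) = \bm{\alpha} \otimes S^{m_0}(h)$, and the shifts can be absorbed into $I$ and $t$.

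The core of the argument is a per-residue polynomial matching. For each $j \in \{0, \ldots, M-1\}$, restricting the identity to $n = Mn' + j$ produces a polynomial equation in $n'$. With $\alpha_j = (a_{j,0}, \ldots, a_{j,L_j-1}, b_j)$, $K = \sum_{j'} L_{j'}$, and $T_j = \sum_{j' < j} L_{j'}$, the left side expands to $b_j + \sum_r c_{j,r}(Kn' + r)^3$, where each $c_{j,r} \geq 0$ is the sum of those $a_{j,i}$'s with $s \in \{0,1,2\}$ and $T_j + i + s = r$. The right side is $(PMn' + A_j)^3$ for some integer $A_j$ depending on $P, I, j$. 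Matching the coefficients of $n'^3, n'^2, n'$ and eliminating $A_j$ yields the key relation $S_1^2 = S_0 S_2$, where $S_p = \sum_r c_{j,r} r^p$.

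The finish is Cauchy--Schwarz applied to the vectors $(\sqrt{c_{j,r}})_r$ and $(r \sqrt{c_{j,r}})_r$: equality $S_1^2 = S_0 S_2$ forces the support of $\{c_{j,r}\}_r$ to contain at most one integer. But the three-cube structure of $h$ means any positive $a_{j,i}$ contributes positively to three consecutive $c_{j,r}$'s, giving support of size at least three, a contradiction; and if every $a_{j,i}$ vanishes, the left side reduces to the constant $b_j$, incompatible with the nonconstant cubic right side (since $P, M \geq 1$). The hard part I anticipate is handling the multi-weight case $M > 1$, since Lemma 4.6 is stated only for $f = n^k$; the per-residue viewpoint sidesteps this cleanly, because each residue class yields an independent polynomial identity to which the same Cauchy--Schwarz argument applies without modification.
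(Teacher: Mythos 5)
Your proof is correct, but it takes a genuinely different route from the paper's. The paper argues by contradiction at the degree level: if a 1-transform $(an+b)^3$ were in the bottom degree, a 3-transform of $n^3$ would transduce to it; composing the two weights and massaging the resulting identity shows $n^3$ itself would equal a ($\geq 3$)-transform of some $(n+t)^3$, contradicting the externally cited fact (Theorem 23 of \cite{4}) that $n^3$ is not in the bottom degree. You instead derive the contradiction directly: the per-residue decomposition of $S^{n_0}(g) = \bm{\alpha}\otimes S^{m_0}(h)$, elimination of $A_j$ to obtain $S_1^2 = S_0S_2$, and the equality case of Cauchy--Schwarz (support of size at most one, against the forced support of size at least three coming from the three consecutive cubes in $h$). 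This is in effect an inline, self-contained re-proof of the incomparability input the paper imports from \cite{4}, and your per-residue handling of multi-weight tuples is a clean substitute both for Lemma 4.6 (stated only for $f = n^k$) and for the paper's weight-composition step; I verified the coefficient matching and the degenerate case, and they check out. One point to tighten: $h(n) = n^3+(n+1)^3+(n+2)^3$ is not literally an $m$-transform of $(n+t)^3$, so calling it a representative of the bottom degree is unjustified as written --- but your argument never needs that. It only needs $\langle h\rangle \geq$ bottom, which Theorem 5.1 gives since $h$ is cubic; then bottom $\geq \langle g\rangle$ would force $\langle h\rangle \geq \langle g\rangle$ by transitivity, which is what you refute. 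Either make that transitivity step explicit, or swap $h$ for the honest 3-transform $(3n)^3+(3n+1)^3+(3n+2)^3$, for which the same ``three distinct positions per nonzero $a_{j,i}$'' count goes through verbatim.
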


\begin{proof} We will proceed by contradiction. Assume that there is some 1-transform of $n^3$ which is equal to the bottom degree. From the definition of the weight product, all 1-transforms of $n^3$ are of the form $(an+b)^3$, with $a,b \in \mathbb{N}, b < a, a \neq 0$. To say that such a function is in the bottom degree is to say that all other cubic polynomials $f$ satisfy $\langle f \rangle \geq \langle (an+b)^3 \rangle$. In particular, we can choose f to be a 3-transform of $n^3$, say $f = \alpha \otimes n^3$ for some weight $\alpha$ with three nonzero entries.

This means that we have $S^{n_0}((an+b)^3) = \beta \otimes S^{m_0}(\alpha \otimes n^3)$.

A bit of algebra gives us that $\beta \otimes S^{m_0}(\alpha \otimes n^3) = S^{m}(\beta \otimes \alpha \otimes n^3) = S^{m}(\gamma \otimes n^3)$, where $m = m_0*|\beta|$ and $\gamma = \beta \otimes \alpha$ is the composition of weights (Lemma 12, \cite{3}). The key thing to note here is that because $\alpha$ had three nonzero entries, $\gamma$ also must have at least three nonzero entries (The composition of a $k$-transform with an $l$-transform is a $kl$-transform). Let $|\gamma| = j$.\\

Therefore we have:\\

$S^{n_0}((an+b)^3) = S^{m}(\gamma \otimes n^3)$\\

and thus\\

$(an+b+an_0)^3 = \sum_{i=0}^{j-1} c_i(jn+i+jm)^3$\\

where at least 3 of the $c_i's$ are nonzero. We can simplify this equation with the goal of getting just $n^3$ on the left hand side. First we multiply both sides by $j^3$, as well as multiplying the right hand side by $\frac{a^3}{a^3}$ and distributing the top $a^3$ to the inside of the parentheses. This gives:\\

$(ajn+jb+ajn_0)^3 = \sum_{i=0}^{j-1} \frac{j^3}{a^3}c_i(ajn+ai+ajm)^3$\\

We can replace $n$ with $\frac{n}{aj}$ to obtain:\\

$(n+jb+ajn_0)^3 = \sum_{i=0}^{j-1} \frac{j^3}{a^3}c_i(n+ai+ajm)^3$\\

Now replace $n$ with $n-jb-ajn_0$:\\

$n^3 = \sum_{i=0}^{j-1} \frac{j^3}{a^3}c_i(n+ai+ajm-jb-ajn_0)^3$\\

Replace $n$ with $ajn$, and then divide both sides by $a^3j^3$ (this leaves the left hand side unchanged):\\

$n^3 = \sum_{i=0}^{j-1} \frac{1}{a^6}c_i(ajn+ai+ajm-jb-ajn_0)^3$\\

Finally, define $c'_{ak} = \frac{c_k}{a^6}$ for all $k < j$ and 0 otherwise, and let $ajm-jb-ajn_0 = t$.\\

Then we have:\\

$n^3 = \sum_{i=0}^{aj-1} c'_i(ajn+i+t)^3$\\

Now the right hand side is clearly equal to a weight transformation of $(n+t)^3$, and since at least three of the $c_i$'s were nonzero, at least three of the $c'_i$'s will be nonzero. This would mean that $n^3$ is equal to a function which is in the bottom degree for cubics, by Thm 5.1. However, it has been shown that this is not the case in (Theorem 23, \cite{4}), therefore this contradiction means that $\langle (an+b)^3 \rangle$ is not equal to the bottom degree.
\end{proof}

So now we have shown that all of the 1-transforms of $n^3$ are strictly above the bottom degree. We still need to consider how they relate to each other. The following lemma will give us a step in that direction by narrowing down the amount of 1-transforms we need to consider.

\begin{lemma} Suppose $a$ is relatively prime to $b$. Then $\langle (an+b)^3 \rangle \equiv \langle (an+1)^3 \rangle$.
\end{lemma}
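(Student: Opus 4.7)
The plan is to prove the two inequalities $\langle (an+b)^3 \rangle \geq \langle (an+1)^3 \rangle$ and $\langle (an+1)^3 \rangle \geq \langle (an+b)^3 \rangle$ separately, using only the elementary equivalences from Lemma 4.2. Theorem 4.4 is not actually needed, and the coprimality hypothesis enters at a single step.

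The first direction is almost immediate. In the trivial case $b = 0$ we must have $a = 1$ and the claim reduces to a single shift, so assume $b \geq 1$. Apply Lemma 4.2.4 with rescaling $n \mapsto bn$ to get $\langle (an+b)^3 \rangle \geq \langle (abn+b)^3 \rangle$. Factoring $(abn+b)^3 = b^3(an+1)^3$ and invoking Lemma 4.2.1 to delete the constant $b^3$ yields $\langle (an+b)^3 \rangle \geq \langle (an+1)^3 \rangle$. Notably, this direction uses no coprimality hypothesis on $b$.

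For the reverse direction, coprimality is used to invert $b$ modulo $a$. First pick $k \geq 1$ with $kb \equiv 1 \pmod a$; such a $k$ exists because $\gcd(a,b) = 1$, and it automatically satisfies $\gcd(a,k) = 1$. Using $\gcd(a,k) = 1$, next pick $t \geq 0$ large enough that $at+1$ is a positive multiple of $k$ and also $(at+1)/k \geq b$; set $m := (at+1)/k$. Observe that $km = at+1 \equiv 1 \equiv kb \pmod a$, and since $k$ is invertible mod $a$, this gives $m \equiv b \pmod a$. The proof is then a chain of Lemma 4.2 applications:
\[
\langle (an+1)^3 \rangle \equiv \langle (an+at+1)^3 \rangle \geq \langle (akn+at+1)^3 \rangle = \langle k^3(an+m)^3 \rangle \equiv \langle (an+m)^3 \rangle \equiv \langle (an+b)^3 \rangle,
\]
justified, in order, by shift (Lemma 4.2.2), rescaling $n \mapsto kn$ (Lemma 4.2.4), the algebraic factorization $akn+at+1 = k(an+m)$, Lemma 4.2.1 to remove the factor $k^3$, and a final shift using $m \equiv b \pmod a$ with $m \geq b$.

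The only real thing to check is the simultaneous choice of $k$ and $t$, but both congruences are standard Bezout-type problems and are solvable precisely because $\gcd(a,b) = 1$. The conceptual heart of the proof is the observation that rescaling $n$ by $k$ converts the additive shift $at+1$ inside the cube into a multiplicative factor $k$, which can then be erased by Lemma 4.2.1; coprimality is exactly what ensures the residues $m$ and $b$ line up at the end.
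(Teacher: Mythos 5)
Your proof is correct and follows essentially the same strategy as the paper: the easy direction is identical, and the hard direction uses the same shift--rescale--factor-out-a-cube chain driven by the invertibility of $b$ in $\mathbb{Z}_a^*$. The only cosmetic difference is that the paper takes the multiplier to be $b^{i-1}$ with $b^i \equiv 1 \pmod a$, which lands exactly on $(an+b)^3$ after factoring, whereas your generic inverse $k$ requires one additional shift at the end.
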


\begin{proof} We prove this via double inequality. One direction is trivial:

$\langle (an+b)^3 \rangle \geq \langle (abn+b)^3 \rangle = \langle b^3(an+1)^3 \rangle \equiv \langle (an+1)^3 \rangle$

The other direction requires a bit more work. First, we prove the following: For all $i > 1 \in {N}$, $\langle (an+b^i)^3 \rangle \geq \langle (an+b)^3 \rangle$

Proof: $\langle (an+b^i)^3 \rangle \geq \langle (ab^{i-1}n+b^i)^3 \rangle = \langle b^{3(i-1)}(an+b)^3 \rangle \equiv \langle (an+b)^3 \rangle$

Now we want to show that for some $i$, $\langle (an+1)^3 \rangle \equiv \langle (an+b^i)^3 \rangle$. Since $b$ is relatively prime to $a$, it is invertible in the multiplicative group $\mathbb{Z}^*_a$, which means for some $i$, $b^i \equiv 1 $ mod $ a$. So there exists some number $m$ with $am+1 = b^i$. Then we have:

$\langle (an+1)^3 \rangle \equiv \langle (a(n+m)+1)^3 \rangle = \langle (an+am+1)^3 \rangle = \langle (an+b^i)^3 \rangle$

Since we proved that $\langle (an+b^i)^3 \rangle \geq \langle (an+b)^3 \rangle$ earlier, this completes the proof.
\end{proof}

Note that in the case that $a,b$ are not relatively prime, we can simply divide out by their common factor and reduce. For example, $\langle (4n+2)^3 \rangle = \langle 8(2n+1)^3 \rangle \equiv \langle (2n+1)^3 \rangle$. 

This means that the degrees of all 1-transforms can be represented by some $(an+1)^3$. The only question now is whether or not some (or all) of these degrees are equal, and how they compare to each other. The following theorem answers this question definitively.

\begin{theorem} $\langle (an+1)^3 \rangle \geq \langle (bn+1)^3 \rangle$ if and only if $a | b$.
\end{theorem}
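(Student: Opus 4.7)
The plan splits the proof into two directions. For the forward direction, suppose $a \mid b$ and write $b = ka$. I will exhibit an explicit single weight witnessing the transduction. Take $\beta$ to be the weight with $k$ weight entries $(1, 0, 0, \ldots, 0)$ and constant term $0$, so $|\beta| = k+1$. The weight product formula then gives
\[
(\langle \beta \rangle \otimes (an+1)^3)(n) = (a \cdot kn + 1)^3 = (bn+1)^3,
\]
and Theorem 4.4 yields $\langle (an+1)^3 \rangle \geq \langle (bn+1)^3 \rangle$.

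The reverse direction is where the real work lies. Assume the transduction holds. Theorem 4.4 supplies integers $n_0, m_0$ and a weight tuple $\bm{\alpha} = \langle \alpha_0, \ldots, \alpha_{m-1} \rangle$ with $(bn+s)^3 = \bm{\alpha} \otimes (an+t)^3$, where $t = a m_0 + 1$ and $s = b n_0 + 1$, so that $\gcd(a,t) = \gcd(b,s) = 1$. The main obstacle is that Lemma 4.6's single-weight reduction is stated only for $f(n) = n^k$, and its proof depends on pulling a clean factor of $m^k$ out of $(mn)^k$; the same trick does not go through for $(an+t)^3$. I will instead work one residue class modulo $m$ at a time.

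For each $k \in \{0,1,\ldots,m-1\}$, setting $L_l = |\alpha_l|-1$, $L = \sum_l L_l$, $L_{<k} = \sum_{l<k} L_l$, and $v_{k,i} = a(L_{<k}+i)+t$, the identity evaluated at $n = jm+k$ becomes the polynomial identity in $j$:
\[
(bmj + bk + s)^3 \;=\; \sum_i a_{k,i}(aLj + v_{k,i})^3 + b_k.
\]
With $r = bm/(aL)$ and $\sigma_q = \sum_i a_{k,i} v_{k,i}^q$, matching the top three $j$-coefficients forces $\sigma_0 = r^3$, $\sigma_1 = r^2(bk+s)$, $\sigma_2 = r(bk+s)^2$, and in particular $\sigma_0 \sigma_2 = \sigma_1^2$. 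This is the equality case of the weighted Cauchy--Schwarz inequality for the nonnegative weights $a_{k,i}$; since the $v_{k,i}$ are strictly increasing in $i$, only one entry $a_{k,i_k}$ of $\alpha_k$ can be nonzero.

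With each $\alpha_k$ reduced to a single nonzero entry, matching the $j^2$ coefficient gives $v_{k,i_k} = aL(bk+s)/(bm)$. Computing $v_{k,i_k} - v_{k-1,i_{k-1}}$ two ways yields $a(L_{k-1} + i_k - i_{k-1}) = aL/m$, so $L/m$ is a positive integer; write $L = L'm$ (for $m = 1$ this step is trivial, taking $L' = L$). Substituting into the $k = 0$ identity $bm(ai_0 + t) = aLs$ gives $abi_0 + bt = aL's$, i.e.\ $a(L's - bi_0) = bt$. Hence $a \mid bt$, and since $\gcd(a,t)=1$ we conclude $a \mid b$.
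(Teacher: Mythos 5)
Your proof is correct, and for the substantive direction it takes a genuinely different route from the paper's. The paper argues structurally: the witnessing weight must have a single nonzero entry, because otherwise the transduct would be an $m$-transform of $n^3$ with $m\ge 2$ and hence (Theorems 5.1 and 5.2) in the bottom degree, contradicting Theorem 5.3 for the $1$-transform $(bn+1)^3$; once the weight is a monomial, $ac=b$ falls out immediately. You avoid the bottom-degree machinery entirely and read everything off the polynomial identity itself: matching the three leading coefficients in $j$ on each residue class modulo $m$ gives $\sigma_0\sigma_2=\sigma_1^2$, the equality case of weighted Cauchy--Schwarz (valid since weight entries are nonnegative and the sample points $v_{k,i}$ are strictly increasing, hence distinct) collapses each $\alpha_k$ to a single nonzero entry, and the arithmetic with $\gcd(a,t)=1$ yields $a\mid b$. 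This buys you two things: the argument is self-contained (Theorems 5.1--5.3 are not needed at all), and it treats the full weight tuple honestly --- you are right that Lemma 4.6 as stated applies only to $f(n)=n^k$, so the paper's silent passage to a single weight needs the extra observation that $(an+1)^3$ is itself a $1$-transform of $n^3$ and that weight products compose, which its parenthetical remark only gestures at. What the paper's route buys is brevity and a structural explanation (any transduct sampling two or more points per block collapses to the bottom degree), while your coefficient-matching argument generalizes essentially verbatim to $(an+1)^k$ for higher $k$. Two micro-details to include in a final write-up: each block has at least one nonzero $a_{k,i}$ because $\sigma_0=r^3>0$, and in the easy direction $k=b/a\ge 1$ so the weight $\beta$ is well-formed.
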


\begin{proof} The reverse implication is relatively straightforward: if $a | b$, then let $b = ac$ and we have $\langle (an+1)^3 \rangle \geq \langle (acn+1)^3 \rangle \equiv \langle (bn+1)^3 \rangle$.

Now for the forward implication. The key observation is that if $\langle (an+1)^3 \rangle \geq \langle (bn+1)^3 \rangle$, this means that there exists $n_0,m_0$, and a weight $\alpha$ such that $S^{n_0}(bn+1)^3 = S^{m_0}(\alpha \otimes (an+1)^3)$, and this weight $\alpha$ must have only one nonzero entry. Otherwise, the right hand side of the previous equation would be an $m$-transform of $n^3$ for $m \geq 2$, which (by Thms 5.1 and 5.2) would put it in the bottom degree. ($(an+1)^3$ is a 1-transform of $n^3$, and an $m$-transform of a 1-transform is also an $m$-transform of the original function, via weight product composition.) To say that $\alpha$ must have only one nonzero entry means that $\alpha \otimes (an+1)^3$ has the form $(a(cn+d)+1)^3 = (acn+ad+1)^3$. Plugging this into the above equation means that\\

$(bn+1+bn_0)^3 = (acn+ad+1+acm_0)^3$\\

This means that $ac = b$ and therefore $a | b$.  
\end{proof}

This theorem also gives us an interesting corollary that is somewhat counterintuitive.

\begin{corollary} Suppose $\langle f \rangle \equiv \langle g \rangle$, and let $\alpha$ be a weight. Then it is possible that $\langle \alpha \otimes f \rangle \not\equiv \langle \alpha \otimes g \rangle$.
\end{corollary}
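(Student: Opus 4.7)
The plan is to produce an explicit counterexample, exploiting the divisibility chain among $1$-transforms classified by Theorem 5.5. I would set $f(n) = n^3$ and $g(n) = (n+1)^3$, so that $\langle f \rangle \equiv \langle g \rangle$ by Lemma 4.2(2), and then pick a single weight $\alpha$ whose action on these two inputs lands in different equivalence classes of the $1$-transform poset.

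A convenient choice is the length-$5$ weight $\alpha = \langle 0, 1, 0, 0, 0 \rangle$, i.e.\ $a_0 = 0$, $a_1 = 1$, $a_2 = a_3 = 0$, $b = 0$. Since $\bm{\alpha} = \langle \alpha \rangle$ is a $1$-element tuple the cyclic shift is trivial, and iterating the defining recurrence (each step shifts the input by $|\alpha| - 1 = 4$) would give
\[
(\alpha \otimes f)(n) = f(4n+1) = (4n+1)^3, \qquad (\alpha \otimes g)(n) = g(4n+1) = (4n+2)^3.
\]

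To finish I would reduce each of these outputs to the normal form $\langle (an+1)^3 \rangle$ used in Theorem 5.5. Since $(4n+2)^3 = 8(2n+1)^3$, Lemma 4.2(1) gives $\langle \alpha \otimes g \rangle \equiv \langle (2n+1)^3 \rangle$, while $\langle \alpha \otimes f \rangle \equiv \langle (4n+1)^3 \rangle$ already. Theorem 5.5 then yields $\langle (2n+1)^3 \rangle \geq \langle (4n+1)^3 \rangle$ (because $2 \mid 4$) but not the reverse (because $4 \nmid 2$), so the two degrees are strictly comparable and in particular not equivalent, which is the content of the corollary.

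There is no substantive obstacle here beyond choosing $\alpha$ so that the shift by $1$ in the input moves the output from residue $1 \pmod 4$ into residue $2 \pmod 4$, which, after $\gcd$-cancellation via Lemma 5.4, collapses to the coarser modulus $2$. In that sense the corollary is essentially a reflection of Theorem 5.5: the $1$-transforms form a nontrivial divisibility poset, and a single weight can send equivalent polynomials to different rungs of this poset precisely because of how the linear shift interacts with the modulus imposed by $\alpha$.
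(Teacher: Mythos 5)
Your proposal is correct and follows essentially the same route as the paper: both take $f = n^3$, $g = (n+1)^3$, apply a single weight that extracts one arithmetic progression from the input, and then invoke Theorem 5.5 to separate the resulting $1$-transforms (the paper uses the slightly shorter weight $\langle 1,0,0\rangle$, yielding $(2n)^3 \equiv n^3$ versus $(2n+1)^3$). The only difference is your choice of weight and the resulting pair $(4n+1)^3$ versus $(2n+1)^3$, which changes nothing substantive.
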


\begin{proof} Let $f = n^3$, $g = (n+1)^3$, $\alpha = \langle 1,0,0 \rangle$. Then $\langle \alpha \otimes f \rangle = \langle (2n)^3 \rangle \equiv \langle n^3 \rangle$, but $\alpha \otimes g = (2n+1)^3$, and $\langle n^3 \rangle > \langle (2n+1)^3 \rangle$ by the theorem.\\
\end{proof}

\section{Conclusion}

What Theorem 5.5 tells us is that the structure of the degrees of 1-transforms of $n^3$ is that of a divisibility lattice, with $\langle n^3 \rangle$ at the top, $\langle (pn+1)^3 \rangle$ in the second row for all primes $p$, etc, and below this lattice is the bottom degree. The results of Theorems 4.5, 5.1, and 5.2 tell us that these are the only degrees that are below the degree of $n^3$, and thus the structure of all transducer degrees below $n^3$ is given by the following picture (we remove the notation $\langle p(n) \rangle$ used for degrees to make the image cleaner):\\

\begin{tikzpicture}
\tikzset{level distance=60pt}
\Tree [.{$n^3$} 
[.\node(2){$(2n+1)^3$}; [.\node(4){$(4n+1)^3$}; {\vdots} ] [.\node(6){$(6n+1)^3$}; {\vdots} ] ] 
[.\node(3){$(3n+1)^3 \cdots$}; [.\node(9){$(9n+1)^3$}; {\vdots} ]]
[.\node(p){$\cdots (pn+1)^3 \cdots$}; [.\node(2p){$\cdots (2pn+1)^3 \cdots$}; {\vdots} ] [.\node(3p){$\cdots (3pn+1)^3 \cdots$}; {\vdots} ] [.\node(qp){$\cdots (qpn+1)^3 \cdots$}; {\vdots} ]]
] 
\draw[-] (3)--(6);
\draw[-] (2)--(2p);
\draw[-] (3)--(3p);
\end{tikzpicture}

\begin{center}
\begin{tikzpicture}[grow'=up]
\hspace*{-.35cm}
\tikzset{level distance=70pt, sibling distance = 50pt}
\Tree [.{$p_3(n)$} [.{} ] ]
\end{tikzpicture}
\end{center}

\begin{center}
$\langle p_3(n) \rangle$ = the bottom degree for cubics (includes all 2-transforms of $n^3$ and all $k$-transforms of $(n+t)^3$ for all $k \geq 3$ and all $t \in \mathbb{N}$)
\end{center}

Therefore all degrees below $n^3$ have been classified. It remains an open question whether or not there are other cubic polynomials that are either above $n^3$ or incomparable with $n^3$. The classification of higher order polynomial degrees is also unknown, but many of the methods used in this paper would be useful to answer such questions.

\printbibliography

\end{document}